\documentclass[journal]{IEEEtran}

\addtolength{\topmargin}{9mm}

\usepackage[utf8]{inputenc} 
\usepackage[T1]{fontenc}
\usepackage{url}
\usepackage{ifthen}
\usepackage{cite}
\usepackage[cmex10]{amsmath} 
\usepackage{amssymb}
\usepackage{graphicx}
\usepackage{epstopdf}
\usepackage{epsfig}
\usepackage{multirow}
\usepackage{amsthm}

\usepackage{algorithm}
\usepackage{algpseudocode}
\usepackage{csquotes}
\usepackage{caption}

\newtheorem{thm}{Theorem}
\newtheorem{lem}{Lemma}

\newtheorem{cor}{Corollary}

\newtheorem{rem}{Remark}

\begin{document}
\title{Decentralized and Online Coded Caching with Shared Caches: Fundamental Limits with Uncoded Prefetching} 

\author{Elizabath Peter, and
        B. Sundar Rajan,~\IEEEmembership{Fellow,~IEEE}
\thanks{E. Peter and B. S. Rajan are with the Department
of Electrical Communication Engineering, Indian Institute of Science, Bangalore, 560012,
India (e-mail: elizabathp@iisc.ac.in, bsrajan@iisc.ac.in).}
\thanks{This work was supported partly by the Science and Engineering Research Board (SERB) of Department of Science and Technology (DST), Government of India, through J.C. Bose National Fellowship to B. Sundar Rajan.}
}

\maketitle 
\begin{abstract}
Decentralized coded caching scheme, introduced by Maddah-Ali and Niesen, assumes that the caches are filled with no coordination. This work identifies a decentralized coded caching scheme -under the assumption of uncoded placement- for shared cache network, where each cache serves multiple users. Each user has access to only a single cache and the number of caches is less than or equal to the number of users. For this setting, we derive the optimal worst-case delivery time for any user-to-cache association profile where each such profile describes the number of users served by each cache. The optimality is shown using an index-coding based converse. Further, we improve the delivery scheme to accommodate redundant demands. Also, an optimal linear error correcting delivery scheme is proposed for the worst-case demand scenario. Next, we consider the Least Recently Sent (LRS) online coded caching scheme where the caches need to be updated based on the sequence of demands made by the users. Cache update happens if any of the demanded file was not partially cached at the users. The update is done by replacing the least recently sent file with the new file. But, the least recently sent file need not be unique. In that case, there needs to be some ordering of the files which are getting partially cached, or else centralized coordination would have to be assumed which does not exist. If each user removes any of the least recently used files at random, then the next delivery phase will not serve the purpose. A modification is suggested for the scheme by incorporating the ordering of files. The fix is suggested so that the results on the error-correction for the online coded caching scheme introduced in [N. S. Karat, K. L. V. Bhargav and B. S. Rajan, \enquote{On the Optimality of Two Decentralized Coded Caching Schemes With and Without Error Correction,} \textit{IEEE ISIT}, Jun. 2020] are not affected.  Moreover, all the above results with shared caches are extended to the online setting.

\end{abstract}


\section{Introduction}
 \label{section:intro}
 
 Caching is a technique to combat the traffic congestion experienced during peak hours, by prefetching parts of the contents into memories distributed across the network during off-peak times. Caching is done in two phases: \textit{placement (or prefetching) phase} and \textit{delivery phase} \cite{r1}. In the placement phase, portions of the contents are stored in the cache memories in uncoded or coded form, without knowing the future requests. In the delivery phase, users reveal their demands to the server and the server has to find effective transmission scheme by utilizing the cached contents. 

The seminal work in \cite{r1} showed that coding in the delivery phase achieves substantial gains over uncoded transmissions by considering a network where a server has access to a database of $N$ equally-sized files and is connected to $K$ users via a bottleneck link. Each user has its own cache of size equal to $M$ files. The scheme in \cite{r1} is a centralized coded caching scheme, which means the placement phase is centrally coordinated. For the same network model, a decentralized coded caching scheme was introduced in \cite{r3} which assumes no coordination in the placement phase, where contents are placed independently and randomly across different caches.

The coded caching problem has been studied in a variety of settings. In \cite{r2}, a cache-aided heterogeneous network model is considered where a server communicates to a set of users with the help of smaller nodes, in which these helper nodes serve as caches that is shared among many users. Each user has access to a single helper cache, which is potentially accessed by multiple users. For this setup, a centralized coded caching scheme based on uncoded placement was proposed in \cite{r2}. In this work, we identify a decentralized coded caching scheme for the shared cache network, under the constraint of uncoded prefetching. We further extend it to the online case where the cache contents are updated on the fly based on the sequence of requests made by the users \cite{online}. 

Error correcting coded caching schemes were introduced for different settings in \cite{r6}-\cite{nj}. In this work, we also consider the case of erroneous shared link during the delivery phase and give an optimal error correcting delivery scheme for the shared cache system in both offline and online scenarios.

\subsection{Related Work}
The decentralized coded caching problem for shared cache was first addressed in [2, Section V.B], which considered a setting where each cache serves equal number of users and each user requests a different file. The caching scheme described in this work accounts any kind of user-to-cache association and any demand type.

The centralized caching scheme in \cite{r2} provides the optimal delivery time for distinct demands, under uncoded prefetching. To accommodate redundant user demands, a new delivery scheme was introduced in \cite{r7} which provides an improved performance for non-distinct demands. A caching scheme based on coded placement was proposed in \cite{r8}, which outperforms the scheme in \cite{r2} in certain memory regimes.

In \cite{r7}, the authors also consider the case when the delivery phase is error-prone and provide an optimal linear error correcting delivery scheme. For shared cache with small memory sizes, \cite{an} provides an optimal linear error correcting scheme. The results of \cite{r2} have been further extended in \cite{r4}, in which the fundamental limits are derived for the case when server is equipped with multiple antennas. 

\subsection{Contributions}
In the context of coded caching with shared caches, our contributions are as follows:
\begin{itemize}
    \item A decentralized coded caching scheme based on uncoded prefetching is proposed, and is shown to be optimal for distinct demand case using index coding techniques. A closed form expression for the delivery time is derived for any user-to-cache association and any demand vector. (Sections \ref{section:decen} and \ref{section:converse}).
    
    \item The decentralized coded caching algorithm for the dedicated cache network \cite{r3} is obtained as a special case of the scheme in this paper.
    
    \item The delivery scheme introduced for uniform user-to-cache association in [2, Section V.B] is shown to be optimal. (Appendix).
    
    \item For the online coded caching scheme in \cite{online}, an ordering is introduced for the initial placement of files to alleviate the ambiguity arising when the least recently sent file is not unique. (Section \ref{section:nujoom}).
    
    \item Generalizes the online coded caching algorithm in \cite{online} to shared caches and derives an expression for the optimal delivery time in distinct demand case. (Section \ref{section:onl}).
    
    \item An optimal linear error correcting delivery scheme is found for the shared caching problem with decentralized placement and the corresponding expression for the worst-case delivery time is obtained. (Section \ref{section:errorc}).
    
  \item For online coded caching with shared caches, an optimal linear error correcting delivery scheme is identified for distinct requests, and an expression for the delivery time in the presence of finite number of errors is obtained. (Section \ref{section:errorc}).

 \end{itemize}
 
 \subsection{Notation}
For any integer $n$, $[n]$ denotes the set $\{1,2,...,n\}$, For a set $\mathcal{S} \subseteq [\Lambda]$, $W^{n}_\mathcal{S}$ represents the bits of file $W^n$ cached exclusively at the helper nodes present in set $\mathcal{S}$. $|\mathcal{S}|$ denotes the cardinality of set $\mathcal{S}$. Binomial coefficients are denoted by $\binom{n}{k}$, where $\binom{n}{k} \triangleq \frac{n!}{k!(n-k)!}$ and $\binom{n}{k}$ is zero for $n < k$. $\mathbb{F}_q$ represents a finite field with $q$ elements. $\mathbb{e}_i \in \mathbb{F}_q^n$ represents a unit vector having a one at $i^{th}$ position and zeros elsewhere and $N_q[k,d]$ denotes the length of an optimal linear error-correcting code over $\mathbb{F}_q$ with dimension $k$ and minimum distance $d$. 

\subsection{Paper Outline}
In Section \ref{section:model}, we provide a description of the system model considered, followed by some preliminaries about index coding and online caching in Section \ref{section:prelim}. Section \ref{section:mainres} presents the
main results and Section \ref{section:decen} describes the caching scheme. The optimality of the scheme is shown in Section \ref{section:converse}. Section \ref{section:nujoom} considers the online setting and presents the modification suggested. Section \ref{section:onl} describes the online coded caching scheme for shared caches and Section \ref{section:errorc} gives error-correcting delivery schemes for both scenarios. Section \ref{section:ex} presents some examples.

\section{System Model}
 \label{section:model}
 We consider a network where a server has a library of $N$ files $\{W^1, W^2,\ldots, W^n\}$, each of size $F$ bits, connected to $K$ requesting users through an error-free shared broadcast link. Each user has access to one of the $ \Lambda \leq K$ helper nodes/caches, each of size $M$ units of file and each cache serves an arbitrary number of users. 

The communication process involves the following three phases:\\
\textit{a) Placement phase:} In the placement phase, the helper nodes have access to the library of $N$ files and store contents from it in a decentralized manner, which is explained in detail in Section \ref{section:decen}.A.\\
\textit{b) User-to-cache association phase:} The placement phase is followed by an additional phase where each of the $K$ requesting  users is given access to exactly one helper cache. That is, each cache $\lambda \in [\Lambda]$ is connected to a set of users $\mathcal{U_\lambda} \subseteq [K]$ such that these sets form a partition of the set of $K$ users. The user-to-cache association is represented as:\\
\begin{equation*}
     \mathcal{U} = \{\mathcal{U}_1, \mathcal{U}_2,\ldots, \mathcal{ U}_\Lambda\}.
\end{equation*}
\noindent This assignment is oblivious of the cached contents and the demands of the users. For a given $\mathcal{U}$, define the association profile (or user profile) $\mathcal{L}$ as
\begin{equation*}
    \mathcal{L} \triangleq (\mathcal{L}_1,\ldots,\mathcal{L}_i,\ldots, \mathcal{L}_\Lambda)
\end{equation*}
\noindent where, $\mathcal{L}_i$ is the number of users associated to cache $i$. Naturally, $\sum_{i=1}^{\Lambda}\mathcal{L}_i = K$. Without loss of generality, we assume $\mathcal{L}_i\geq\mathcal{L}_j$ for $i \leq j$ and $\mathcal{U}_i$ to be an ordered set.
\\

\begin{figure}[t!]
 \begin{center}
 \captionsetup{justification=centering}
 \includegraphics[width=\columnwidth]{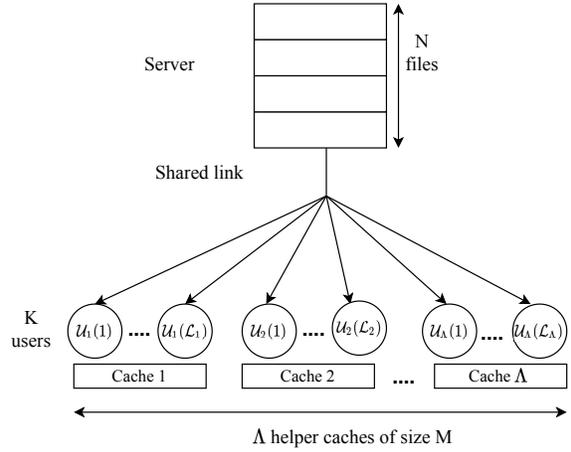}
 \caption{Caching setting considered in this work.}
 \end{center}
\end{figure}

\noindent \textit{c) Delivery phase:} The delivery phase starts when each user $k \in [K]$ demands a file $W^{d_k}$. Upon receiving the demand vector $d = (d_1, d_2,..., d_K)$, the server broadcasts a message over the shared link to the users. Each user should be able to obtain its requested file using the cached contents and the received message.\\ 

\noindent \textit{Performance measure}: Our measure of interest is the delivery time required to satisfy a demand vector $d$, for a given user-to-cache association $\mathcal{U}$.  Different user-to-cache associations $\mathcal{U}$ can yield the same profile $\mathcal{L}$. Therefore, we express the delivery time in terms of $\mathcal{L}$ and let $T_D(\mathcal{L},d)$ denote the delivery time needed to satisfy any demand $d$. The worst-case delivery time for a given $\mathcal{L}$ corresponds to the maximum value of $T_D(\mathcal{L},d)$ over all the demand types. Our aim is to characterize the minimum possible worst-case delivery time $T_D^*(\mathcal{L})$, under the assumption of decentralized uncoded placement. For non-distinct demands, an improved delivery time can be obtained by following the transmission scheme described in Section \ref{section:decen}.E.

\section{Preliminaries}
 \label{section:prelim}
In this section, we briefly review some of the concepts from error correction for index coding \cite{r5}, which are relevant in proving the optimality of our caching scheme. We also discuss the online scheme in \cite{online} in brief.

\subsection{Index Coding Problem}
The index coding problem with side-information was first considered in \cite{iscod}. The setup is as follows: there is a server with $n$ messages $\{x_1,\ldots, x_n\}$, where $x_i \in \mathbb{F}_q$, $\forall i \in [n]$. There are $K$ receivers and each of them possesses a subset of messages $\{x_j\}_{j \in \mathcal{X}_i}$ as side-information where $\mathcal{X}_i$ represents set of indices of the messages known to $i^{th}$ receiver. Each receiver wants $x_{f(i)}$ where $f: [K] \rightarrow[n]$ is a function that maps receivers to the index of the message demanded by the them and $f(i) \notin \mathcal{X}_i$ \cite{r5}. (A receiver demanding multiple messages can be equivalently seen as that many receivers each demanding a single message and has the same side-information). With the knowledge of the side-information available to each receiver, the server has to satisfy the demands of all the receivers in minimum number of transmissions.

An instance of the index coding problem can be described by a directed hypergraph $\mathcal{H}=(\mathcal{V}, \mathcal{E}_{\mathcal{H}})$ where $\mathcal{V}=[n]$ is the set of vertices and $\mathcal{E}_{\mathcal{H}}$ is the set of hyperedges \cite{index}. Vertices represent the messages and each hyperedge represents a receiver. The min-rank of the hypergraph $\mathcal{H}$  corresponding to an index coding problem is defined as \cite{r5}
\begin{equation*}
\kappa({\mathcal{I}}) = min\{rank_q(\{v_i + e_{f(i)}\}_{i \in [K]}): v_i \in \mathbb{F}_q^n, v_i \triangleleft \mathcal{X}_i\}.
\end{equation*}
The min-rank corresponds to the smallest length of the scalar linear index code for the problem. 

For an undirected graph, $\mathcal{G} = (\mathcal{V}, \mathcal{E})$, a subset of vertices $\mathcal{S} \subseteq \mathcal{V}$ is called an independent set if $\forall u, v \in \mathcal{S}$, $(u, v) \notin \mathcal{E}$. The cardinality of largest independent set in $\mathcal{G}$ is called the independence number of $\mathcal{G}$.
In \cite{r5}, the notion of independence number is extended to the directed hypergraph corresponding to an index coding problem. For each receiver $i \in [K]$, define the sets 
\begin{equation}
\mathcal{Y}_i \triangleq [n] \backslash \big(\{f(i)\}\cup \mathcal{X}_i\big)
\end{equation}
and
\begin{equation}
    \mathcal{J}(\mathcal{I}) \triangleq \cup_{i\in[K]}\{\{f(i)\} \cup Y_i : Y_i \subseteq\mathcal{Y}_i\}.
    \label{ind}
\end{equation}

A subset $H$ of $[n]$ is called a generalized independent set in $\mathcal{H}$ if every non-empty subset of $H$ belongs to $\mathcal{J}(\mathcal{I})$. The size of the largest independent set in $\mathcal{H}$ is called the generalized independence number, and denoted by $\alpha(\mathcal{H})$. 

For any index coding problem, the following relation exists \cite{r6}
\begin{equation}
\alpha(\mathcal{H}) \leq \kappa(\mathcal{H}).
\label{bound}
\end{equation}
For a directed graph, generalized independence number is same as the size of the maximum acyclic induced subgraph of $\mathcal{H}$.

\subsection{Error Correcting Index Code(ECIC)}
An index code is said to correct $\delta$-errors if each receiver is able to decode its message correctly even if at most $\delta$ transmissions went erroneous. A $\delta$-error correcting index code is represented as $(\delta, \mathcal{H})$-ECIC. An optimal linear $(\delta, \mathcal{H})$-ECIC over $\mathbb{F}_q$ is a linear $(\delta, \mathcal{H})$-ECIC over $\mathbb{F}_q$ with minimum possible length $\mathcal{N}_q[\mathcal{H}, \delta]$. According to \cite{r5}, an optimal linear $(\delta, \mathcal{H})$-ECIC over $\mathbb{F}_q$ must satisfy
\begin{equation}
    N_q[\alpha(\mathcal{H}), 2\delta+1] \leq  \mathcal{N}_q[\mathcal{H}, \delta] \leq   N_q[\kappa(\mathcal{H}), 2\delta+1].
    \label{ebound}
\end{equation}
The lower bound is known as the $\alpha$-bound and the upper bound is known as the $\kappa$-bound. The upper-bound is obtained by concatenating an optimal linear index code with an optimal linear error correcting code. When $\alpha(\mathcal{H}) = \kappa(\mathcal{H})$, both the bounds meet and concatenation gives optimal ECICs. 

\subsection{Corresponding Index Coding Problem for a Coded Caching Scheme}
For a fixed placement and for a fixed demand, the delivery phase of a coded caching problem is an index coding problem. In the case of shared caches, the caching problem is defined when the user-to-cache association and demands are revealed. The side-information available to each receiver depends on the user-to-cache association. The index coding problem corresponding to the shared caching scheme with the decentralized placement  $\mathcal{M}_D$, user-to-cache association $\mathcal{L}$ and demand $d$ is denoted by $\mathcal{I}(\mathcal{M}_D,\mathcal{L},d)$. 
For the index-coding problem $\mathcal{I}(\mathcal{M}_D,\mathcal{L},d)$, the min-rank and the  generalized independence number are denoted as $\kappa(\mathcal{M}_D ,\mathcal{L},d)$ and $\alpha(\mathcal{M}_D, \mathcal{L},d)$, respectively.

\subsection{Online Coded Caching Scheme: LRS (Least Recently Sent) Scheme}

The online coded caching scheme- known as the coded LRS caching algorithm-was introduced in \cite{online}, for a network with each user having its own cache. At a time-slot $t$, the coded LRS algorithm consists of two phases: a delivery phase and a cache update phase. The demand of the user $k$ at time $t$, is denoted by $W^{d_t(k)}$. The users issue a sequence of requests from a set of popular files $\mathcal{N}_t$, uniformly at random ( $|\mathcal{N}_t|=N$). The set of popular files change over the course of time. A basic Markov model is assumed for renewing the set of popular files. The set $\mathcal{N}_t$ evolves from the set $\mathcal{N}_{t-1}$ using an arrival/departure process. Consequently, at the time of demand,
some of the requested files may not be cached in the memories. Those files need to be sent uncoded over the link and the remaining requests are served using the decentralized scheme in \cite{r3}.

After the delivery phase, the caches are updated as follows. In each time slot $t$, each user maintains a list of $N^{'}= \beta N$ partially cached files for some $\beta \geq 1$. At time $t$, after the delivery, if a requested file $W^{d_t(k')}$ is not currently partially cached, each user replaces the least-recently sent file with $MF/N^{'}$ randomly chosen bits from the file $W^{d_t(k')}$. This is possible, since the file  $W^{d_t(k')}$ was sent uncoded during the delivery. The eviction policy guarantees that the number of cached files remains the same at all instants.

\section{Main Result}
 \label{section:mainres}
 \noindent The main result is presented in the following theorem.
\begin{thm}
  For a coded caching problem with $N$ files, each of size $F$ bits ($F$ is large enough) and $K$ users sharing $\Lambda \leq K$ helper caches of size $MF$ bits, where $M \in [0,N]$, the  optimal worst-case delivery time for a given association profile $\mathcal{L}$, under decentralized uncoded placement is 
  \label{thm1}
  \begin{equation}
  \begin{aligned}[f]
     T_D^{*}(\mathcal{L}) & = 
     \left(\frac{N-M}{M}\right)\times\\
     & \displaystyle\sum_{s=1}^{\Lambda}\left[
\displaystyle\sum_{n=1}^{\Lambda-(s-1)} \mathcal {L}_n\binom{\Lambda-n}{s-1}\right]
\left(\frac{M}{N}\right)^s\left(1-\frac{M}{N}\right)^{\Lambda-s}. \label{eq1}
\end{aligned}
 \end{equation}
 
  \end{thm}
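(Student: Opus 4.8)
The plan is to prove the two inequalities separately: a decentralized scheme achieving the right-hand side of \eqref{eq1}, and an index-coding converse showing no scheme can do better under the fixed placement.

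\emph{Achievability.} I would adopt the decentralized placement in which each of the $\Lambda$ caches independently stores a random $M/N$ fraction of the bits of every file. Writing $p=M/N$, for large $F$ the law of large numbers gives $|W^n_{\mathcal{S}}|\approx F\,p^{|\mathcal{S}|}(1-p)^{\Lambda-|\mathcal{S}|}$ for every file $W^n$ and every $\mathcal{S}\subseteq[\Lambda]$. For the worst case (distinct demands) I deliver, for each $\mathcal{S}\subseteq[\Lambda]$ with $|\mathcal{S}|=s$, a family of coded multicasts: indexing the users of cache $\lambda$ as $k_{\lambda,1},k_{\lambda,2},\dots$, I transmit in round $j$ the combination $\bigoplus_{\lambda\in\mathcal{S},\,\mathcal{L}_\lambda\ge j} W^{d_{k_{\lambda,j}}}_{\mathcal{S}\setminus\{\lambda\}}$. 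Every interfering term $W^{d_{k_{\lambda',j}}}_{\mathcal{S}\setminus\{\lambda'\}}$ is stored at all caches of $\mathcal{S}\setminus\{\lambda'\}\ni\lambda$, so the target user of cache $\lambda$ cancels it and recovers its own subfile; the scheme is therefore decodable. Each $\mathcal{S}$ needs $\max_{\lambda\in\mathcal{S}}\mathcal{L}_\lambda$ transmissions, each of length $F\,p^{s-1}(1-p)^{\Lambda-s+1}$.

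\emph{The combinatorial identity.} The achievable normalized time is $\sum_{s=1}^{\Lambda} p^{s-1}(1-p)^{\Lambda-s+1}\sum_{|\mathcal{S}|=s}\max_{\lambda\in\mathcal{S}}\mathcal{L}_\lambda$. Because the profile is sorted, $\max_{\lambda\in\mathcal{S}}\mathcal{L}_\lambda=\mathcal{L}_{\min\mathcal{S}}$, and the number of $s$-subsets whose smallest index equals $n$ is $\binom{\Lambda-n}{s-1}$; hence $\sum_{|\mathcal{S}|=s}\max_{\lambda\in\mathcal{S}}\mathcal{L}_\lambda=\sum_{n=1}^{\Lambda-(s-1)}\mathcal{L}_n\binom{\Lambda-n}{s-1}$. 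Substituting this and using $\frac{N-M}{M}=\frac{1-p}{p}$ reproduces \eqref{eq1} exactly; applying the binomial theorem collapses it further to the compact form $\sum_{n=1}^{\Lambda}\mathcal{L}_n(1-M/N)^n$, which recovers the Maddah--Ali--Niesen decentralized rate when $\Lambda=K$ and every $\mathcal{L}_n=1$, a useful sanity check.

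\emph{Converse and main obstacle.} For the fixed decentralized placement $\mathcal{M}_D$ and a worst-case distinct demand $d$, the delivery phase is the index-coding instance $\mathcal{I}(\mathcal{M}_D,\mathcal{L},d)$, so by \eqref{bound} any scheme needs at least $\alpha(\mathcal{M}_D,\mathcal{L},d)$ bits. I would exhibit the generalized independent set $H=\{W^{d_k}_{\mathcal{S}}: n\in[\Lambda],\,k\in\mathcal{U}_n,\,\mathcal{S}\subseteq\{n+1,\dots,\Lambda\}\}$, whose total size is $\sum_{n}\mathcal{L}_n\sum_{\mathcal{S}\subseteq\{n+1,\dots,\Lambda\}}F\,p^{|\mathcal{S}|}(1-p)^{\Lambda-|\mathcal{S}|}=F\sum_{n}\mathcal{L}_n(1-p)^n$, matching the achievable value. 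To verify independence I order the subfiles of $H$ by increasing cache index $n$: the receiver of $W^{d_k}_{\mathcal{S}}$ (a user of cache $n$) holds a later subfile $W^{d_{k'}}_{\mathcal{S}'}$ (cache $n'\ge n$) as side information only if $n\in\mathcal{S}'$, which is impossible since $\mathcal{S}'\subseteq\{n'+1,\dots,\Lambda\}$ and $n\le n'$. Every edge of the induced side-information subgraph thus points backward in this order, the subgraph is acyclic, and by the stated identification of the generalized independence number with the maximum acyclic induced subgraph, $H$ is a valid generalized independent set; its cardinality certifies optimality. The achievability and the identity are routine bookkeeping, so the delicate step is precisely this converse: selecting $H$ and proving that the increasing-cache-index ordering admits no side-information cycle while its size meets the achievable expression.
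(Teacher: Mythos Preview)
Your proposal is correct and follows essentially the same route as the paper: the same decentralized placement and round-based XOR delivery for achievability, and the same generalized independent set $H=\{W^{d_k}_{\mathcal{S}}:\,k\in\mathcal{U}_n,\ \mathcal{S}\subseteq\{n+1,\dots,\Lambda\}\}$ with an acyclic-ordering argument for the converse. Your derivation of the count $\sum_{|\mathcal{S}|=s}\max_{\lambda\in\mathcal{S}}\mathcal{L}_\lambda=\sum_{n}\mathcal{L}_n\binom{\Lambda-n}{s-1}$ via $\max_{\lambda\in\mathcal{S}}\mathcal{L}_\lambda=\mathcal{L}_{\min\mathcal{S}}$ is a bit more self-contained than the paper's (which cites an external reference for the identity), and your observation that both sides collapse to the closed form $\sum_{n=1}^{\Lambda}\mathcal{L}_n(1-M/N)^n$ makes the achievability--converse match more transparent than in the paper, but these are presentational rather than structural differences.
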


\begin{proof}\renewcommand{\qedsymbol}{}
   The achievability and converse of \eqref{eq1} are presented in Section IV and Section V, respectively.
\end{proof} 

\begin{cor}
If the user-to-cache association is uniform, then $\mathcal{L}_{unif.}= (\frac{K}{\Lambda},..,\frac{K}{\Lambda})$. The optimal worst-case delivery time becomes
\begin{equation}
  \begin{aligned}[f]
      T_D^{*}(\mathcal{L}_{unif.})  =
     \left(\frac{N-M}{M}\right)\left( \frac{K}{\Lambda}\right)\left[1-\left(1-\frac{M}{N}\right)^{\Lambda}\right].
     \label{eq2}
\end{aligned}
 \end{equation}
 
 \end{cor}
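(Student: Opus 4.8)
The plan is to specialize the general formula \eqref{eq1} of Theorem \ref{thm1} to the uniform profile and collapse the resulting double sum using two elementary identities. First I would substitute $\mathcal{L}_n = K/\Lambda$ for every $n \in [\Lambda]$ into \eqref{eq1}. Since this value is independent of $n$, it factors out of the inner summation, so that
\begin{equation*}
\sum_{n=1}^{\Lambda-(s-1)}\mathcal{L}_n\binom{\Lambda-n}{s-1} = \frac{K}{\Lambda}\sum_{n=1}^{\Lambda-s+1}\binom{\Lambda-n}{s-1}.
\end{equation*}
The problem is thereby reduced to evaluating the purely combinatorial inner sum.

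The key step is the evaluation of $\sum_{n=1}^{\Lambda-s+1}\binom{\Lambda-n}{s-1}$. Re-indexing with $m = \Lambda - n$ transforms this into $\sum_{m=s-1}^{\Lambda-1}\binom{m}{s-1}$, which collapses by the hockey-stick identity to $\binom{\Lambda}{s}$. Substituting back, the outer sum over $s$ becomes
\begin{equation*}
\frac{K}{\Lambda}\sum_{s=1}^{\Lambda}\binom{\Lambda}{s}\left(\frac{M}{N}\right)^s\left(1-\frac{M}{N}\right)^{\Lambda-s}.
\end{equation*}

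Finally I would recognize this as a binomial expansion with the $s=0$ term removed. Since $\sum_{s=0}^{\Lambda}\binom{\Lambda}{s}(M/N)^s(1-M/N)^{\Lambda-s} = 1$ by the binomial theorem, subtracting the $s=0$ term $(1-M/N)^\Lambda$ leaves $1 - (1-M/N)^\Lambda$. Reintroducing the prefactor $(N-M)/M$ from \eqref{eq1} then yields \eqref{eq2} directly. This derivation is entirely routine; the only mild obstacle is correctly identifying the hockey-stick sum over the shifted binomial coefficients, after which everything follows from the binomial theorem.
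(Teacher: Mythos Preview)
Your proposal is correct and follows essentially the same route as the paper's Appendix derivation: substitute $\mathcal{L}_n=K/\Lambda$, apply the hockey-stick identity to collapse the inner sum to $\binom{\Lambda}{s}$, and then recognize the remaining sum as the binomial expansion of $1$ with the $s=0$ term removed. The only cosmetic difference is that you re-index via $m=\Lambda-n$ before invoking hockey-stick, whereas the paper states the identity directly.
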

\begin{proof}\renewcommand{\qedsymbol}{}
The proof of transition from \eqref{eq1} to \eqref{eq2} is given in Appendix.
\end{proof} 

\noindent Further in the case of uniform profile, if $\Lambda = K$, \eqref{eq2} reduces to
 \begin{equation}
  \begin{aligned}[f]
      T_D^{*}(\mathcal{L}_{unif.}, \Lambda = K) & =
     \left(\frac{N-M}{M}\right)\left[1-\left(1-\frac{M}{N}\right)^{K}\right].
    \label{eq3}
\end{aligned}
\end{equation}

The expression in \eqref{eq3} is same as the time obtained for a dedicated cache network in decentralized case \cite{r3}.\\

When all the users are assigned to the same helper cache, the delivery time is
\begin{equation}
   \begin{aligned}[f]
      T_D^*(\mathcal{L}) = K\left(1 - \frac{M}{N}\right).
      \label{eq4}
    \end{aligned}
\end{equation}
Derivation of \eqref{eq4} is given in the Appendix. In this case, we obtain only local caching. The centralized scheme in \cite{r2} also exhibits the same performance.

Next, consider online coded caching with shared caches. The users request from a time-varying set of $N$ popular files, and the caches prefetch contents from a set of $N^{'}\geq N$ files in a decentralized fashion. Assume distinct demand case, the optimal delivery time $T^{*}_{online}(\mathcal{L})$ is given in the following theorem.

\begin{thm}
  In an online coded caching problem with $\Lambda$ shared caches, when $U$ number of requested files are not partially cached in the memories, the optimal delivery time required for any profile $\mathcal{L}$ is
  \label{thm2}
  \begin{equation}
  \begin{aligned}[f]
     T_{online}^{*}(\mathcal{L}) & = U +
     \left(\frac{N^{'}-M}{M}\right).\\
     & \displaystyle\sum_{s=1}^{\Lambda}\left[
\displaystyle\sum_{n=1}^{\Lambda-(s-1)} \mathcal {L}_n^{'}\binom{\Lambda-n}{s-1}\right]
\left(\frac{M}{N^{'}}\right)^s\left(1-\frac{M}{N^{'}}\right)^{\Lambda-s}
\label{eqonl}
\end{aligned}
 \end{equation}
\noindent where, $\mathcal{L}^{'}$ is the modified profile obtained after removing the users requesting those $U$ files from its corresponding caches.
  \end{thm}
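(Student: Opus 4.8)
The plan is to reduce Theorem \ref{thm2} to Theorem \ref{thm1} by separating the delivery into an uncoded component that handles the $U$ files absent from the caches and a coded component that handles the remaining demands. First I would observe that because the placement in the online scheme is decentralized over the set of $N'$ partially cached files, each bit of a cached file is stored at a given helper node independently with probability $M/N'$; hence the cached portion of the problem is statistically identical to the offline shared-cache setting of Theorem \ref{thm1} with $N$ replaced by $N'$. The $U$ requested files that are not partially cached carry no side information at any node (zero cached bits), so in the distinct-demand case each of them must be delivered in full, contributing exactly $U$ units (one file-transmission each) to the delivery time. This accounts for the additive $U$ term in \eqref{eqonl}.

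For achievability I would send the $U$ absent files uncoded, which immediately satisfies the users requesting them. Removing these users from their associated caches yields the modified profile $\mathcal{L}'$, and the residual demands are then all for partially cached files. Applying the decentralized coded delivery scheme of Section \ref{section:decen} to this residual instance --- now an offline shared-cache instance with library size $N'$ and profile $\mathcal{L}'$ --- delivers the remaining files in time equal to the expression in \eqref{eq1} with the substitutions $N \to N'$ and $\mathcal{L} \to \mathcal{L}'$. Summing the two contributions gives \eqref{eqonl}.

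For the converse I would cast the online delivery as an index coding problem $\mathcal{I}(\mathcal{M}_D,\mathcal{L},d)$ and lower bound the worst-case delivery time by the (normalized) generalized independence number $\alpha(\mathcal{H})$, exactly as in the converse of Theorem \ref{thm1}. The key step is to argue that $\alpha$ decomposes additively: the single subfiles of the $U$ uncached files form a generalized independent set of size $U$ that is disjoint from, and shares no cache membership with, the demanded subfiles of the $\mathcal{L}'$-users. Combining this with the index coding lower bound used for Theorem \ref{thm1}, applied to the reduced instance, would yield $\alpha(\mathcal{H}) \geq U + T^{(N')}$, where $T^{(N')}$ denotes the right-hand side of \eqref{eq1} evaluated with $N'$ and $\mathcal{L}'$ in place of $N$ and $\mathcal{L}$. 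Since this matches the achievable time, optimality follows from $\alpha(\mathcal{H}) \leq \kappa(\mathcal{H})$ in \eqref{bound}.

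The main obstacle I anticipate is the converse decomposition --- specifically, verifying that no index code can exploit the uncoded transmissions of the $U$ absent files as side information to compress the coded delivery of the remaining users, and symmetrically that the coded transmissions cannot shorten the delivery of the $U$ absent files. In the distinct-demand regime this should hold because the $U$ files are entirely uncached and distinct from the residual demanded files, so the corresponding vertices in the side-information hypergraph are isolated from the coded-caching component; making this rigorous requires carefully exhibiting the generalized independent set inside the combined hypergraph and confirming the additivity of $\alpha(\mathcal{H})$.
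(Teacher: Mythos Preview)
Your proposal is correct and mirrors the paper's own argument almost exactly: for achievability the paper sends the $U$ uncached files uncoded, removes those users to obtain $\mathcal{L}'$, and applies the offline delivery of Section~\ref{section:decen} with $N'$ in place of $N$; for the converse the paper builds the generalized independent set $H = A \cup B$ with $B$ the full (uncached) files $\{W^n : n \in d_n\}$ and $A$ the offline-type set $\{W^n_{\mathcal{S}} : 1,\ldots,c(n)\notin\mathcal{S}\}$ for the remaining $K-U$ users, then invokes $\alpha \le \kappa$ exactly as you describe. The ``obstacle'' you flag is dispatched in the paper by the single observation that the messages in $B$ are not side-information at any receiver, so $A\cup B$ is automatically generalized independent once $A$ is.
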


\begin{proof}\renewcommand{\qedsymbol}{}
   The achievability and converse of \eqref{eqonl} are given in Section \ref{section:onl}.
\end{proof} 

\begin{rem}
  When all the requested files are partially cached, the situation becomes similar to the decentralized scheme with shared caches. The optimal delivery time then reduces to
  \begin{equation}
  \begin{aligned}[f]
     T_{online}^{*}(\mathcal{L}) & =     \left(\frac{N^{'}-M}{M}\right)\times\\
     & \displaystyle\sum_{s=1}^{\Lambda}\left[
\displaystyle\sum_{n=1}^{\Lambda-(s-1)} \mathcal {L}_n\binom{\Lambda-n}{s-1}\right]
\left(\frac{M}{N^{'}}\right)^s\left(1-\frac{M}{N^{'}}\right)^{\Lambda-s}.
\label{eqonl2}
\end{aligned}
 \end{equation}
  
\end{rem}
 
\section{Decentralized Coded Caching Scheme}
 \label{section:decen}
 The formal description of the caching scheme which achieves the performance in Theorem \ref{thm1} is given below.
\subsection{Placement Phase}
Each helper node independently caches $\frac{MF}{N}$ bits of each file, chosen uniformly at random. The placement procedure effectively partitions each file $W^{n}, n \in [N]$ into $2^\Lambda$ subfiles in the following manner:
\begin{equation*}
    W^n = (W^n_\phi, W^n_{\{1\}},..,W^n_{\{\Lambda\}},W^n_{\{1,2\}},...,W^n_{\{1,2,..,\Lambda\}}).
\end{equation*}
 By law of large numbers, for $\mathcal{S} \subseteq [\Lambda]$, $|W^n_\mathcal{S}|$ is given as:
 \begin{equation*}
   |W^n_\mathcal{S}| \approx \left(\frac{M}{N}\right)^{|\mathcal{S}|}\left(1-\frac{M}{N}\right)^{\Lambda-|\mathcal{S}|}F
 \end{equation*}
with probability approaching one for large enough file size $F$. The helper node, $\lambda \in [\Lambda]$ has subfiles of the form $W^n_\mathcal{S}$, where $\lambda \in \mathcal{S}$ for $\forall n \in [N]$.
\subsection{Delivery Phase}
Consider a user-to-cache association $\mathcal{U}$ with a profile $\mathcal{L}$.
The demands of the users are communicated to the server, assume each user requests a different file. For an association profile $\mathcal{L}$, delivery consists of $\mathcal{L}_1$ rounds (similar to the transmission method in \cite{r2}), where each round $j \in [\mathcal{L}_1]$ serves the set of users
\begin{equation*}
    \mathcal{R}_j = \bigcup_{\lambda \in [\Lambda]}(\mathcal{U}_{\lambda}(j) : \mathcal{L}_{\lambda} \geq j)
\end{equation*}
\noindent where, $\mathcal{U}_{\lambda}(j)$ is the $j$-th user in the set $\mathcal{U}_{\lambda}$.

The following algorithm describes the delivery procedure for the demand vector $(d_1,..,d_K)$.

\begin{algorithm}
\renewcommand{\thealgorithm}{1}
\caption{Delivery phase}
\begin{algorithmic}[1]
    \For{\texttt{$j = 1:\mathcal{L}_1$}}
     \For{\texttt{$s = \Lambda, \Lambda-1,...,1$}}
       \For{\texttt{$\mathcal{S} \subseteq [\Lambda]: |\mathcal{S}|=s$}}
         \State  Transmit $\underset{\lambda \in \mathcal{S}: \mathcal{L}_\lambda \geq j}{\bigoplus} W_{\mathcal{S}\backslash\{\lambda\}}^{d_{\mathcal{U}_{\lambda}(j)}}$
      \EndFor
     \EndFor
    \EndFor

\end{algorithmic}
\end{algorithm}
The delivery procedure assumes that the server has the knowledge of the contents placed at each helper node. This is achieved by sending the seed value of the random number generator to the server, which is used for caching the bits of each file. 
In the case of uniform user-to-cache association with $\Lambda=K$, the above algorithm recovers the delivery scheme in \cite{r3}. Hence, it can be viewed as a generalization of the algorithm in \cite{r3}.

\subsection{Decoding procedure}
Consider a transmission corresponding to a particular set   $\mathcal{S}$ in line $3$. For a user $\mathcal{U}_{\lambda}(j)$, the received signal is of the form:
\begin{equation*}
    y_{\mathcal{U}_{\lambda}(j)} = W_{\mathcal{S}\backslash\{\lambda\}}^{d_{\mathcal{U}_{\lambda}(j)}}\underbrace{\underset{\lambda' \in \mathcal{S}\backslash\{\lambda\}: \mathcal{L}_{\lambda'} \geq j} {\bigoplus} W_{\mathcal{S}\backslash\{\lambda'\}}^{d_{\mathcal{U}_{\lambda'}(j)}}}_\text{available as side-information}.
\end{equation*}
Thus, each user $\mathcal{U}_{\lambda}(j)$ can obtain its desired subfiles from the received messages by canceling out the side-information.

\subsection{Delivery time Calculation}
The probability of a bit of a file to be present in one of the helper node/cache is:
\begin{equation*}
    q \triangleq \frac{M}{N}.
\end{equation*}
Then, the expected size of $W_{\mathcal{S}\backslash\{\lambda\}}^{d_{\mathcal{U}_{\lambda}(j)}}$ with $|\mathcal{S}|=s$, where $s \in [\Lambda]$ is:
\begin{equation*}
    F q^{s-1}(1-q)^{\Lambda-(s-1)}.
\end{equation*}

\noindent The number of transmissions associated to each $s$ sized subset in each round $j \in [\mathcal{L}_1]$ is:
\begin{equation*}
    \binom{\Lambda}{s} - \binom{\Lambda - |\mathcal{R}_j|}{s}.
\end{equation*}
Then, the time required for the transmissions corresponding to a particular value of $s$ over all rounds is
\begin{equation*}
   \left(\displaystyle\sum_{j=1}^{\mathcal{L}_1}\binom{\Lambda}{s}-\binom{\Lambda - |\mathcal{R}_j|}{s}\right)q^{s-1}(1-q)^{\Lambda-(s-1)}F.
\end{equation*}
Summing over all values of $s$ gives the normalized delivery time as:
\begin{equation}
  \begin{aligned}[f]
      T_D(\mathcal{L}) & =\\
     &  \displaystyle\sum_{s=1}^{\Lambda} \left(\displaystyle\sum_{j=1}^{\mathcal{L}_1}\binom{\Lambda}{s}-\binom{\Lambda - |\mathcal{R}_j|}{s}\right)q^{s-1}(1-q)^{\Lambda-s+1}\\
      = & \left(\frac{1-q}{q}\right).\\
      &\displaystyle\sum_{s=1}^{\Lambda} \left(\displaystyle\sum_{j=1}^{\mathcal{L}_1}\binom{\Lambda}{s}-\binom{\Lambda - |\mathcal{R}_j|}{s}\right)q^{s}(1-q)^{\Lambda-s}.
      \label{e}
\end{aligned}
 \end{equation}

The term
$ \displaystyle\sum_{j=1}^{\mathcal{L}_1}\binom{\Lambda}{s}-\binom{\Lambda - |\mathcal{R}_j|}{s}$ can be replaced by $ \displaystyle\sum_{n=1}^{\Lambda-(s-1)}\mathcal{L}_n\binom{\Lambda-n}{s-1}$ (see \cite{r4}, Section V). Also, substituting for $q = \frac{M}{N}$ in \eqref{e}, the normalized delivery time becomes

\begin{equation}
  \begin{aligned}[f]
      T_D(\mathcal{L}) &= 
      \left(\frac{N-M}{M}\right)\times\\
      & \displaystyle\sum_{s=1}^{\Lambda} \left(\displaystyle\sum_{n=1}^{\Lambda-(s-1)}\mathcal{L}_n\binom{\Lambda-n}{s-1}\right)\left(\frac{M}{N}\right)^{s}\left(1-\frac{M}{N}\right)^{\Lambda-s}.
     \label{e2}
\end{aligned}
 \end{equation}

This completes the achievability part of the caching scheme. 

\subsection{For non-distinct demands}
In the case of redundant user demands, we consider a slightly improved transmission scheme based on the scheme in [7, Section V]. For a demand vector $d=(d_1, d_2,..., d_K)$, $N_e(d)$ represents the number of distinct files in $d$. Initially, we will address the issue of repeated demands arising from the users connected to the same cache. For a user-to-cache association $\mathcal{U}$ with a profile $\mathcal{L}$, let $N_e(\mathcal{U}_\lambda)$ denote the number of distinct files requested by the users associated to cache $\lambda \in [\Lambda]$. We will consider only $N_e(\mathcal{U}_\lambda)$ users from each cache, since the remaining users in $\mathcal{U}_\lambda$ can obtain their files from the transmissions meant for $N_e(\mathcal{U}_\lambda)$ users. Hence, the users with redundant demands from each cache are not considered in the delivery process resulting in a different user-to-cache association $\mathcal{U}'$. Let the corresponding association profile be $\mathcal{L'}$, and without loss of generality, we assume $\mathcal{L}_1^{'}\geq...\geq\mathcal{L}_\Lambda^{'}$.

The server initially transmits $W^n_{\phi}$ subfiles of the distinct files requested by the users, in uncoded form. The delivery of the remaining subfiles happens in $\mathcal{L}_1^{'}$ rounds. For $j \in [\mathcal{L}_1^{'}]$, define a set $\mathcal{R}_j^{'}$ as follows
\begin{equation*}
    \mathcal{R}_j^{'} = \bigcup_{\lambda \in [\Lambda]}(\mathcal{U}_{\lambda}^{'}(j) : \mathcal{L}_{\lambda}^{'}\geq j).
\end{equation*}

Redundancy in demands can still exist amongst the users belonging to different caches. Let $N_e(j)$ be the number of distinct requests in round $j$. In each round, the server arbitrarily selects a subset of $N_e(j)$ users from $\mathcal{R}_j^{'}$, requesting different files. These set of users are referred to as leaders and is denoted by $\mathcal{Q}_j$. Algorithm 1 works here also, with minor changes in line $3$ and $4$. In line $3$: instead of $s\in [\Lambda]$, we take $s \in [\Lambda]\backslash \{1\}$ due to the initial transmission of $W^n_\phi$ subfiles. And in line $4$: the server transmits the message only if it benefits at least one user in $\mathcal{Q}_j$, otherwise there is no transmission. 

\begin{thm}
When all the demands are not distinct, the delivery time achieved is
\begin{equation}, 
  \begin{aligned}[f]
     T_D(\mathcal{L}) = N_e(d)(1-q)^{\Lambda} + &\displaystyle\sum_{s=2}^{\Lambda} \left[\displaystyle\sum_{j=1}^{\mathcal{L}_1^{'}}\binom{\Lambda}{s}-\binom{\Lambda - N_e(j)}{s}\right].\\
     & q^{s-1}(1-q)^{\Lambda-s+1}
     \label{red}
  \end{aligned}
 \end{equation}
\noindent where, $q \triangleq M/N$.
 \end{thm}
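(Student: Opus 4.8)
The plan is to prove Theorem~3 in two logically separate parts: first to verify that the modified delivery scheme of Section~IV.E is \emph{correct}, in the sense that every user recovers its requested file, and second to \emph{count} the transmitted bits so as to arrive at \eqref{red}. For the counting I would treat the two contributions independently, namely the uncoded delivery of the $W^n_\phi$ subfiles and the coded multicasts produced by the modified Algorithm~1 over the $\mathcal{L}_1'$ rounds.

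For correctness I would first dispose of the intra-cache redundancy. Since all users attached to a cache $\lambda$ hold identical side information, any two of them requesting the same file face the same decoding problem, so serving the $N_e(\mathcal{U}_\lambda)$ distinct-file representatives of $\mathcal{U}_\lambda$ automatically serves the remaining users; this justifies passing to the reduced association $\mathcal{U}'$ with profile $\mathcal{L}'$. The substantive step is the inter-cache redundancy handled by the leaders $\mathcal{Q}_j$. For a leader $\mathcal{U}_\mu'(j)$, every set $\mathcal{S}\ni\mu$ contains a leader cache, so each multicast it participates in is actually transmitted; since for $\lambda'\neq\mu$ we have $\mu\in\mathcal{S}\backslash\{\lambda'\}$, all the other terms of the multicast are cached at $\mu$, and after cancelling them the leader extracts $W_{\mathcal{S}\backslash\{\mu\}}^{d_{\mathcal{U}_\mu'(j)}}$ for every $\mathcal{S}\ni\mu$ and thus recovers its whole file. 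For a non-leader $\mathcal{U}_\lambda'(j)$ sharing its demand with a leader, I would invoke the leader argument adapted to shared caches as in \cite{r7}, Section~V: the dropped multicasts—those indexed by leader-free sets $\mathcal{S}$—are $\mathbb{F}_2$-linear combinations of the transmitted (leader-containing) multicasts together with subfiles that are either cached or already decoded, so the non-leader can reproduce every equation it would have received in the unreduced scheme and decode exactly as before.

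For the length, the uncoded phase sends $W^n_\phi$ once for each of the $N_e(d)$ distinct files in $d$; since $|W^n_\phi|=(1-q)^{\Lambda}F$, this contributes the term $N_e(d)(1-q)^{\Lambda}$ after normalisation by $F$. For the coded phase, in round $j$ and for a fixed size $s\in\{2,\dots,\Lambda\}$, a multicast indexed by $\mathcal{S}$ is transmitted iff it benefits some leader, i.e.\ iff $\mathcal{S}$ meets the set of $N_e(j)$ leader caches; the number of such $s$-subsets is $\binom{\Lambda}{s}-\binom{\Lambda-N_e(j)}{s}$, being all $s$-subsets minus those avoiding every leader cache. Each such multicast is an XOR of equal-sized subfiles of size $q^{s-1}(1-q)^{\Lambda-s+1}F$, hence costs $q^{s-1}(1-q)^{\Lambda-s+1}$ normalised units. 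Summing over $s=2,\dots,\Lambda$ and $j=1,\dots,\mathcal{L}_1'$ and adding the uncoded term yields exactly \eqref{red}.

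The main obstacle will be the correctness of the inter-cache reduction, i.e.\ proving that dropping the leader-free multicasts preserves decodability for every non-leader. The natural route is the acyclic/rank argument for coded caching with redundant demands: one shows that each dropped multicast is dependent on the retained ones through the cycle relations imposed by the demand structure, so no information is lost. Everything else—the intra-cache collapse and the bit-counting—is routine bookkeeping once this redundancy lemma is in place.
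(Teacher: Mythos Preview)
Your proposal is correct and follows essentially the same approach as the paper: the paper's own proof also splits into the uncoded $W^n_\phi$ term and the counting of leader-meeting $s$-subsets $\binom{\Lambda}{s}-\binom{\Lambda-N_e(j)}{s}$, and handles correctness by observing that leaders decode directly while non-leaders recover their missing subfiles via linear combinations of the received messages. You supply considerably more detail on the non-leader decoding (and correctly point to the argument in \cite{r7}), whereas the paper states this step in a single sentence without justification.
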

 
 \begin{proof}\renewcommand{\qedsymbol}{}
 
 The first term in the right-hand side of \eqref{red} corresponds to the transmission of $W^n_\phi$ subfiles of the $N_e(d)$ distinct files. The server greedily broadcasts the remaining subfiles only if it directly helps one leader user. That is, in each round $j\in[\mathcal{L}_1^{'}]$ ($\mathcal{L}^{'}$ is the user profile obtained after removing the redundancies within a cache), the server makes the following number of transmissions $\binom{\Lambda}{s}-\binom{\Lambda-N_e(j)}{s}$, $\forall s \in \{2,\ldots,\Lambda\}$. This results in the second term of the right hand side of \eqref{red}.
 
 The set of leader users can directly decode its subfiles from the received messages. For non-leader users, all the required subfiles may not be directly available. However, it can generate these subfiles by taking appropriate linear combinations of the received contents.
 
 \end{proof}

\section{Converse}
 \label{section:converse}
 A lower bound for the worst-case delivery time is obtained using index coding techniques. The equivalent index coding problem  $\mathcal{I}(\mathcal{M}_D,\mathcal{L},d)$ (for distinct demands) is obtained as follows: each requested file is split into $2^\Lambda$ disjoint subfiles $W^{d_k}_{\mathcal{S}}$, where $\mathcal{S} \subseteq [\Lambda]$ represents the set of helper nodes in which the parts of the file are cached. There are in total $K2^{\Lambda-1}$ requested subfiles and each bit of it corresponds to a message in the index coding problem. Without loss of generality, assume single unicast case (where each user requests a single, distinct message). 

From the delay obtained in \eqref{e2}, we get an upper bound on the length of the optimal linear index code for the problem $\mathcal{I}(\mathcal{M}_D,\mathcal{L},d)$. That is, we have $\kappa(\mathcal{M}_D, \mathcal{L},d) \leq T_D(\mathcal{L})$, which gives
\begin{equation}
 \begin{aligned}[f]
    \kappa(\mathcal{M}_D, \mathcal{L},d)  \leq & \left(\frac{N-M}{M}\right)\displaystyle\sum_{s=1}^{\Lambda} \left(\displaystyle\sum_{n=1}^{\Lambda-(s-1)}\mathcal{L}_n\binom{\Lambda-n}{s-1}\right).\\
    & \left(\frac{M}{N}\right)^{s}\left(1-\frac{M}{N}\right)^{\Lambda-s}F.
      \label{k}
 \end{aligned}
\end{equation}

Using the relation in \eqref{bound}, we can lower bound $\kappa(\mathcal{M}_D, \mathcal{L},d)$ by the generalized independence number $\alpha(\mathcal{M}_D, \mathcal{L},d)$. A generalized independent set $H$ can be constructed as follows:
\begin{equation*}
    H = \bigcup_{n \in d}\{ W^n_{\mathcal{S}}: 1,2,...,c(n) \notin \mathcal{S} \}
\end{equation*}
\noindent where, $c(n)$ represents the cache to which the user demanding the file $W^n$ is associated to. The claim is that the messages in set $H$ form an acyclic subgraph in the hypergraph representing $\mathcal{I}(\mathcal{M}_D,\mathcal{L},d)$. Consider a set $H' \subseteq H$ such that

\begin{equation*}
    H' = \{W^{n_1}_{\mathcal{S}_1}, W^{n_2}_{\mathcal{S}_2},..., W^{n_k}_{\mathcal{S}_k} \}
\end{equation*}
\noindent where, $c(n_1)\leq c(n_2)\leq...\leq c(n_k)$. Consider the message $W^{n_1}_{\mathcal{S}_1}$ for $|\mathcal{S}_1|= \Lambda-1$ (say), then the user requesting this message does not have any other message in $H'$ as side information, which implies the set of messages in $H'$ form an acyclic subgraph. Thus, any subset of $H$ lies in  $\mathcal{J}(\mathcal{I})$. Hence, we have $\alpha (\mathcal{M}_D,\mathcal{L},d) \geq |H|$. 

The size of the messages in $H$ corresponding to the demands of the users associated to a helper node $n \in [\Lambda]$ is given by 
\begin{equation*}
\displaystyle\sum_{s=0}^{\Lambda-n}\mathcal{L}_n\binom{\Lambda-n}{s}\left(\frac{M}{N}\right)^s\left(1-\frac{M}{N}\right)^{\Lambda-s}F .
\end{equation*}

The limits of the summation came so because $\binom{\Lambda-n}{s}$ is zero for $\Lambda-n < s$. Therefore,
\begin{equation}
 \begin{aligned}[f]
    |H|  = 
      \displaystyle\sum_{n=1}^{\Lambda} \displaystyle\sum_{s=0}^{\Lambda-n}\mathcal{L}_n\binom{\Lambda-n}{s}\left(\frac{M}{N}\right)^{s}\left(1-\frac{M}{N}\right)^{\Lambda-s}F.
 \label{a}
 \end{aligned} 
\end{equation}
Changing the limits of second summation, \eqref{a} becomes
\begin{equation}
 \begin{aligned}[f]
    |H|  = 
      \displaystyle\sum_{n=1}^{\Lambda} \displaystyle\sum_{s=1}^{\Lambda-n+1}\mathcal{L}_n &\binom{\Lambda-n}{s-1}  \left(\frac{M}{N}\right)^{s-1}.\\
      & \left(1-\frac{M}{N}\right)^{\Lambda-(s-1)}F.
 \label{a1}
 \end{aligned} 
\end{equation}
Changing the order of summation in \eqref{a1}

\begin{equation}
 \begin{aligned}[f]
  \textrm{\hspace{-2cm}}   |H|  = & \displaystyle\sum_{s=1}^{\Lambda} \left[ \displaystyle\sum_{n=1}^{\Lambda-s+1}\mathcal{L}_n\binom{\Lambda-n}{s-1}\right].\\
       & \textrm{\hspace{1cm}}\left(\frac{M}{N}\right)^{s-1}\left(1-\frac{M}{N}\right)^{\Lambda-s+1}F.
 \label{amid}
 \end{aligned} 
  \end{equation}

Thus,
\begin{equation}
 \begin{aligned}[f]
    \alpha (\mathcal{M}_D, \mathcal{L},d) \geq  \left(\frac{N-M}{M}\right)&\displaystyle\sum_{s=1}^{\Lambda} \left[\displaystyle\sum_{n=1}^{\Lambda-s+1}\mathcal{L}_n\binom{\Lambda-n}{s-1}\right].\\
     & \left(\frac{M}{N}\right)^{s}\left(1-\frac{M}{N}\right)^{\Lambda-s}F.
 \label{a2}
 \end{aligned} 
\end{equation}

\noindent Hence, from \eqref{bound}, \eqref{k} and \eqref{a2}, we obtain
\begin{equation}
 \begin{aligned}[f]
    \kappa (\mathcal{M}_D, \mathcal{L},d) = & \alpha(\mathcal{M}_D, \mathcal{L},d)\\
       = & \left(\frac{N-M}{M}\right)\displaystyle\sum_{s=1}^{\Lambda} \left[\displaystyle\sum_{n=1}^{\Lambda-s+1}\mathcal{L}_n\binom{\Lambda-n}{s-1}\right].\\
       & \textrm{\hspace{1cm}}\left(\frac{M}{N}\right)^{s}\left(1-\frac{M}{N}\right)^{\Lambda-s}F
 \label{a3}
 \end{aligned} 
\end{equation}
\noindent which is same as the worst-case delivery time obtained by the caching scheme described in Section \ref{section:decen}. This concludes the proof of Theorem 1.

\section{A Comment on \enquote{Online Coded Caching}}
 \label{section:nujoom}
 The coded Least Recently Sent (LRS) caching algorithm for online coded caching \cite{online} considered a dedicated cache network.
The algorithm involves two phases: a delivery phase and a cache update phase. When all the demanded files during the delivery are partially cached, there is no requirement for a cache update. When a file which is not partially cached is demanded, a cache update phase needs to be done. The LRS coded caching algorithm in \cite{online} replaces the new file with the least recently sent file from the users' cache. We observed that the least recently sent file need not be unique in some cases. In those cases, all the users may not be able to replace the same file unless there is cooperation assumed among the users. We suggest a modification to the LRS algorithm by ordering of files so that this problem will be taken care of. This fix will keep the results on the error-correction for the online coded caching scheme \cite{nj} intact. 

\subsection{Example: Non-Uniqueness of the Least Recently Sent File}
\label{subsec:ex}
This example illustrates the fact that the least recently sent file need not be unique. Consider an online coded caching setting with $N=3$ popular files $\{A, B, C\}$, $K = 3$ users, cache memory of $M=1$. Let $\beta = \frac{5}{3}$, so that each user caches a fraction $\frac{1}{5}$ of the bits of $N' = \beta N=5$ files. We assume that initially each user partially caches the files $\{A, B, C, D, E\}$. 
\begin{itemize}
\item At the first time slot, $t=1$, let the demand be $d =(A,B,C).$ Since all the files are partially cached, no cache update is done.
\item At the second time slot, $t=2$, let the demand be $d=(A,B,F).$ Here $F$ is not partially cached. Hence cache update is required. But there are two files $D$ and $E$ which are not sent by the server. Hence, the least recently sent file here is not unique.
\end{itemize}
If we had an ordering of files before placing for the first time, then there would be a unique file to get replaced during the cache update.

\subsection{Ordering of Files}
The ordering of files before getting placed for the first time will make the file to be replaced a unique one. Before discussing this ordering, we state one observation as a lemma below.
\begin{lem}
No ordering is necessary for the files which are not partially cached before the first time slot.
\end{lem}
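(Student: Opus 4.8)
The plan is to trace exactly \emph{how} a file that is not partially cached before the first time slot can ever enter a user's cache, and then to argue that whenever it does, every user simultaneously acquires a consistent record of its ``last-sent'' time, which is precisely what renders an a priori ordering unnecessary for such files.

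First I would recall the mechanics of the LRS update. In the scheme of \cite{online}, a file is written into the caches only through a cache-update step, and this step is triggered precisely when a demanded file was \emph{not} partially cached; such a file is transmitted uncoded over the shared link during that slot and only then is it copied into the users' memories (in $MF/N'$ randomly chosen bits). Consequently, a file $W$ that is not partially cached before the first time slot can appear in any cache only after it has been broadcast uncoded at some slot $t$. Since the link is an error-free broadcast link observed by every user, all $K$ users witness this transmission at the same slot $t$ and therefore record the same last-sent time for $W$.

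The key step is then to observe that the ambiguity the ordering is meant to resolve is an ambiguity \emph{of agreement}, not merely of value: the LRS rule breaks down only when two files are tied as ``least recently sent'' and the users, acting without coordination, might evict different ones. For any two files $W_1, W_2$ that were not partially cached before the first slot, each carries a well-defined last-sent time that is common to all users by the broadcast argument above. Hence every user independently reaches the \emph{same} conclusion about which of them was sent less recently, and no pre-agreed ordering is required to keep the caches synchronized. By contrast, a file that was partially cached before the first slot but has never been transmitted has no sent-time at all, so distinct users share no basis for ranking it against other never-sent files --- this is exactly the tie the ordering must break, and it is confined to the initially placed files.

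I expect the main obstacle to be the within-slot case: when several uncached files are demanded simultaneously and all are sent uncoded in the same slot, one must rule out a residual tie among them. The plan is to resolve this by appealing to the fact that the uncoded transmissions within a slot are themselves sent in a definite sequence over the broadcast link, so their relative order is observed identically by every user and serves as a consistent, coordination-free tiebreak. Establishing this ``common observation implies consistent ordering'' principle cleanly, and pinning down that it can fail only for the never-transmitted initial files, is the crux of the argument.
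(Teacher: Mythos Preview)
Your proposal is correct and follows essentially the same argument as the paper: files not initially cached enter the caches only after being sent uncoded, so if they are demanded at different slots they acquire distinct last-sent times, and if they are demanded in the same slot the server's sequential uncoded transmission supplies a common order observed by all users. The paper's proof is more terse but the two-case split and the resolution of each case are identical to yours.
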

\begin{IEEEproof}
The files which are not partially cached are sent uncoded during the delivery. If these are demanded at different time slots, they will be updated at different time slots and hence there is no ambiguity among these files regarding the order. If these files are demanded at the same time slot, then the server will send these files uncoded one after the other. Hence the order can be taken as the order in which the server sends these files.
\end{IEEEproof}
Thus ordering is required for the files which are cached at the users before the first time slot. The ordering can be formally discussed as follows: Let $X_1, X_2, \ldots, X_{N^{'}}$ be the files that are to be partially cached at the users. Assign a unique ordering parameter $o \in \{1,2, \ldots, N'\}$ to each file. This can be chosen based on their popularity or can be chosen randomly. Once this is done, the LRS scheme in \cite{online} can be followed. Whenever there is a situation where there are two or more files that can be replaced with a newly arrived file, this ordering parameter is compared and the file with the least $o$ can be selected for replacement. This will be the same for all the users and the uniqueness is now guaranteed. This can be illustrated through the following two examples.\\

\noindent \textit{Example 1:} Consider the example given in Section \ref{subsec:ex}. Initially, each user partially caches the files $\{A, B, C, D, E\}$. Assign a unique ordering parameter to each of these files. Assume that the parameters assigned are: $1$ to $E$, $2$ to $D$, $3$ to $C$, $4$ to $B$ and $5$ to $A$. During the second time slot, out of $D$ and $E$, now the users can choose $E$ to be replaced with the newly arrived file $F$ without any ambiguity. \\

\noindent \textit{Example 2:} Consider an online coded caching problem with $N=3$ popular files ${A,B}$, $K=2$ users, cache memory of $M=1$. Let $\beta=3$, so that each user caches a fraction $\frac{1}{6}$ of the bits of $N^{'}= \beta N= 6$ files. Let the files which are partially cached initially are $\{A, B, C, D, E, F \}.$ Let the ordering parameters assigned be $1$ to $A$, $2$ to $B$, $3$ to $C$, $4$ to $D$, $5$ to $E$ and $6$ to $F$. 
\begin{itemize}
\item At the time slot $t=1$, let the demand be $d=(A,B).$ No cache update is required.
\item At the time slot $t=2$, let the demand be $d=(C,D).$ No cache update is required.
\item At the time slot, $t=3$, let the demand be $d=(E,F).$ No cache update is required.
\item At the time slot, $t=4$, let the demand be $d=(E,G).$ The file $G$ is a new arrival and hence a cache update is required here. The least recently sent files are $A$ and $B$. Since there are two of them, the ordering parameter has to be taken into consideration. The one with the least ordering parameter here is $A$ and hence $A$ is replaced with $G$ by all the users without any ambiguity. 
\end{itemize}

\section{Online Coded Caching with Shared Caches}
  \label{section:onl}
  In this section, we propose an online coded caching algorithm for the shared cache network. The setting is as follows: In each time slot $t$, the server maintains a set of popular files $\mathcal{N}_t$, where $|\mathcal{N}_t|=N$. The set of helper nodes $\Lambda$ (of size $M$) caches from a list of $N'=\beta N$ files in a decentralized manner, for some $\beta \geq 1$. The placement phase is same as that in Section \ref{section:decen}.A, but we assume an ordering for the initial set of cached files. The users get connected to one of the caches, and let the association be $\mathcal{U}$ with profile $\mathcal{L}$. The set $\mathcal{N}_t$ evolves over time and each user $k \in [K]$ demands a file $W^{d_t(k)}$ from the set $\mathcal{N}_t$, uniformly at random. Assume distinct demand case, and the demand vector at time $t$ is represented as $d_t=(d_t(1), d_t(2),..., d_t(K))$. 

In the case of online algorithms, the delivery procedure has to account for an additional scenario where some or all of the users' requested files are not partially cached. That means, the cache contents are not updated with the current set of popular files. In such scenarios, all those requested files that are not partially cached are sent uncoded over the shared link and the rest of the requests are satisfied using the delivery scheme described in Algorithm 1 (which is considered as the equivalent offline scheme).

After the delivery phase, the caches are updated as follows: For any user $k$, if the requested file $d_t(k)$ is not partially cached, then all the helper nodes evict the least-recently sent file contents and instead, caches $MF/N'$ bits of the file $d_t(k)$. This is possible since the file $d_t(k)$ is sent uncoded over the link during the transmissions. The eviction policy guarantees that the same set of files are cached at all the nodes. The coded LRS algorithm for the shared cache network is summarized in Algorithm 2.

\begin{algorithm}
\renewcommand{\thealgorithm}{2}
\caption{Coded LRS Caching algorithm for Shared Caches}
\begin{algorithmic}[1]
\Procedure{Delivery}{}  
     \For{\texttt{$k \in [K]$}}
       \If{$W^{d_t(k)}$ is not partially cached}
 
         \State Send the entire file $W^{d_t(k)}$, uncoded and
         \State Modify $\mathcal{U} \rightarrow \mathcal{U}^{'}$ and $\mathcal{L} \rightarrow \mathcal{L}^{'}$, after removing the user $k$.
       \EndIf
    \EndFor
    \For{\texttt{$j = 1:\mathcal{L}^{'}_1$}}
     \For{\texttt{$s = \Lambda, \Lambda-1,...,1$}}
       \For{\texttt{$\mathcal{S} \subseteq [\Lambda]: |\mathcal{S}|=s$}}
       
          \State  Transmit $\underset{{\lambda \in \mathcal{S}: \mathcal{L}^{'}_\lambda \geq j}}{\bigoplus}W_{\mathcal{S}\backslash \{\lambda\}}^{d_t({\mathcal{U}^{'}_{\lambda}(j))}}$
      \EndFor
     \EndFor
    \EndFor
\EndProcedure
\Procedure{Cache Update}{}
  \For{\texttt{$k \in [K]$}}
  \If{$W^{d_t(k)}$ is not partially cached}
 
    \State The least recently requested file contents are replaced from all the caches with a random subset of $MF/N'$ bits of file $W^{d_t(k)}$.
  \EndIf
  \EndFor
\EndProcedure
     
\end{algorithmic}
\end{algorithm}

We assume that the set $\mathcal{N}_t$ from which the demand vector $d_t$ is derived, remains unchanged over the course of delivery. Decoding is similar to the offline scheme.

\textit{Performance Analysis}: The number of time slots/ the delivery time required to satisfy the demand $d_t$ for an association profile $\mathcal{L}$ is represented as $T_{online}(\mathcal{L})$, where all the demands in $d_t$ are distinct. The optimal delivery time is denoted as $T_{online}^{*}(\mathcal{L})$.
There are two situations arising in online algorithms:
\begin{itemize}
    \item \textit{Case I}: When all the requested files are partially cached, the normalized delivery time required to satisfy $d_t$ is
    \begin{equation}
        T_{online}^{*}(\mathcal{L}) = T_D(\mathcal{L}).
    \end{equation}
    
    \item \textit{Case II}: In $d_t$, if $U$ number of requested files are not cached in any of the nodes, the normalized delivery time is
     \begin{equation}
        T_{online}^{*}(\mathcal{L}) = T_D(\mathcal{L}^{'}) + U
        \label{delay_online}
    \end{equation}
\noindent where, $\mathcal{L}^{'}$ is the modified user-to-cache association obtained after removing $U$ users from $\mathcal{L}$.
    
\end{itemize}

\noindent \textit{Converse}: The LRS online caching scheme for shared caches, given in Algorithm $2$ is optimal.

Since Case I is same as the offline scheme, it is enough to show the optimality for Case II. The index coding problem associated with this case is represented as $\mathcal{I}(\mathcal{M}_{LRS},\mathcal{L},d_t)$, which consists of 
$((K-U)2^{\Lambda-1} + U)F$ messages wanted by $K$ users. Assume unicast case. The min-rank and independence number are denoted as $\kappa(\mathcal{M}_{LRS}, \mathcal{L},d_t)$ and $\alpha(\mathcal{M}_{LRS}, \mathcal{L},d_t)$, respectively.

From \eqref{delay_online}, we get
\begin{equation*}
    \kappa(\mathcal{M}_{LRS}, \mathcal{L},d_t) \leq  T_D(\mathcal{L}^{'})F + UF 
\end{equation*}
\noindent which gives 
\begin{equation}
\begin{aligned}[f]
    \kappa(\mathcal{M}_{LRS}, \mathcal{L},d_t) \leq & \left(\frac{N'-M}{M}\right)\displaystyle\sum_{s=1}^{\Lambda} \left[\displaystyle\sum_{n=1}^{\Lambda-s+1}\mathcal{L}_n^{'}\binom{\Lambda-n}{s-1}\right].\\
     & \left(\frac{M}{N'}\right)^{s}\left(1-\frac{M}{N'}\right)^{\Lambda-s}F + UF.
     \label{onkappa}
     \end{aligned}
\end{equation}

Construct a set $H$ as follows:
\begin{equation*}
    H =  \left(\bigcup_{n \in d\backslash d_n}\{ W^n_{\mathcal{S}}: 1,2,...,c(n) \notin \mathcal{S} \}\right) \bigcup_{n \in d_n}{W^n}
\end{equation*}
\noindent where, $d_n$ represents the indices of the requested files that are not partially cached. Let $A= \bigcup_{n \in d\backslash d_n}\{ W^n_{\mathcal{S}}: 1,2,...,c(n) \notin \mathcal{S} \}$  and $B = \bigcup_{n \in d_n}{W^n}$. Messages in set $B$ are not available as side-information to any receivers. Observe that the set $A$ is a generalized independent set (it corresponds to the offline scheme with $K-U$ users having a profile $\mathcal{L}^{'}$ ). Hence, it follows directly that $H$ is a generalized independent set and $\alpha(\mathcal{M}_{LRS}, \mathcal{L},d_t) \geq |H|$. Therefore,
\begin{equation}
\begin{aligned}[f]
    \alpha(\mathcal{M}_{LRS}, \mathcal{L},d_t) \geq & \left(\frac{N'-M}{M}\right)\displaystyle\sum_{s=1}^{\Lambda} \left[\displaystyle\sum_{n=1}^{\Lambda-s+1}\mathcal{L}_n^{'}\binom{\Lambda-n}{s-1}\right].\\
     & \left(\frac{M}{N'}\right)^{s}\left(1-\frac{M}{N'}\right)^{\Lambda-s}F + UF.
     \label{onlalpha}
     \end{aligned}
\end{equation}

Thus from \eqref{bound}, \eqref{onkappa} and \eqref{onlalpha}, the optimality follows. This concludes the proof of Theorem 2.

\section{Optimal Linear Error Correcting Delivery Scheme for Decentralized Coded Caching with Shared Caches}
   \label{section:errorc}
   Consider the case where the shared link between the server and the users is error-prone. Each user should be able to decode their desired files even in the presence of a finite number of transmission errors. In the converse proofs for both offline and online schemes, it is shown that
\begin{equation}
 \alpha(\mathcal{M},\mathcal{L},d) = \kappa(\mathcal{M},\mathcal{L},d)
 \label{ec}
 \end{equation}
\noindent where, $\mathcal{M}$ represents the placement schemes $\mathcal{M}_D$ and $\mathcal{M}_{LRS}$. (In online case, $d$ corresponds to the demand vector at a particular time instant). Therefore by \eqref{ebound}, we can obtain an optimal linear error correcting delivery scheme by concatenating the corresponding delivery scheme with an optimal linear error correcting code of required minimum distance. Let $T_D^{*}(\mathcal{L},\delta)$ denotes the optimal worst-case delivery time taken by a $\delta-$error correcting caching scheme (in offline case). Therefore, 
 
\begin{equation*}
       T_D^{*}(\mathcal{L},\delta) = \frac{N_q[\kappa(\mathcal{M}_D,\mathcal{L},d),2\delta+1]}{F}.
 \end{equation*}
   
Decoding is done using syndrome decoding proposed for error correcting index codes \cite{r5}. The same applies to online case also.

\section{Examples}
  \label{section:ex}
  In this section, we present examples to illustrate the caching schemes described in Sections \ref{section:decen} and \ref{section:onl}.\\

\noindent \textit{Example 1: $K = N = 4$, $\Lambda = 2$ and $\mathcal{L}=(3,1)$}.

Consider a case with $K = 4$ users connected to $\Lambda=2$ helper caches, each of size $M = 2$ units of file, storing contents from a set of $N=4$ files $\{W^1, W^2, W^3, W^4\}$. Each file is of size $F$ bits.

In the placement phase, each helper node independently caches $\frac{F}{2}$ bits of each file, chosen uniformly at random. The placement procedure partitions the file $W^n$ into $4$ disjoint subfiles $\{W^n_\phi, W^n_{1}, W^n_{2}, W^n_{\{1,2\}}\}$. For sufficiently large $F$,
\begin{equation*}
 \begin{aligned}
    |W^n_\phi| = |W^n_{1}| =  |W^n_{2}| = |W^n_{\{1,2\}}| \approx \frac{F}{4}
    \end{aligned}
\end{equation*}

The placement phase is followed by user-to-cache assignment phase where 
users $\mathcal{U}_1=\{1,2,3\}$ and $\mathcal{U}_2=\{4\}$ are assigned to caches $1$ and $2$, respectively. This association leads to a $\mathcal{L}=(3,1)$. Let the demand vector be $d=(1,2,3,4)$.

Delivery consists of $\mathcal{L}_1=3$ rounds, where each round serves the following set of users\\
\begin{equation*}
    \begin{aligned}
     \mathcal{R}_1 &= \{1,4\}\\
     \mathcal{R}_2 &= \{2\}\\
     \mathcal{R}_3 &= \{3\}\\
    \end{aligned}
\end{equation*}
 
 The transmissions in three rounds are as follows:
 \begin{equation*}
    \begin{aligned}
      \textrm{First round} &:     W^1_2 \oplus W^4_1, W^1_\phi, W^4_\phi\\
       \textrm{Second round} &:  W^2_2, W^2_\phi\\
       \textrm{Third round} &:   W^3_2, W^3_\phi
   \end{aligned}
\end{equation*}

Thus, the normalized delivery time required to serve all the users is $T_D(3,1) = \frac{7}{4}$.

\textit{Optimality}: The index coding problem $\mathcal{I}(\mathcal{M}_D,\mathcal{L},d)$ consists of $8F$ messages and $8F$  receivers. For the side-information graph representing $\mathcal{I}(\mathcal{M}_D,\mathcal{L},d)$, we have $\kappa(\mathcal{M}_D,\mathcal{L},d) \leq \frac{7}{4}F$. Construct a generalized independent set $H$ as follows

\begin{equation*}
    H = \{W^1_2, W^1_\phi, W^2_2, W^2_\phi, W^3_2, W^3_\phi, W^4_\phi\}.
\end{equation*}

Thus, we obtain $\alpha(\mathcal{M}_D,\mathcal{L},d)\geq |H| =\frac{7}{4}F$ which gives $\kappa(\mathcal{M}_D,\mathcal{L},d) = \alpha(\mathcal{M}_D,\mathcal{L},d) = \frac{7}{4}F$.

\textit{Error Correction}: If we need to correct $\delta = 1$ error, the transmissions need to be concatenated with an optimal classical error-correcting code of length $N_2[7,3]$ (by taking $F =4$). From \cite{grassl}, we get $N_2[7,3] = 11$. Therefore, the optimal delivery time in the case of single error correction is $11/4$  (normalized with respect to file size). Syndrome decoding is used to recover the messages.

\textit{Comparison}: For this example, the performance of centralized schemes are as follows:
\begin{center}
$T(3,1) = \frac{3}{2}$ \hspace{1cm}   from \cite{r2} (uncoded placement).\\
\end{center}
\begin{center}
$T(3,1) = \frac{3}{2}$ \hspace{1.4cm}   from \cite{r8} (coded placement).\\
\end{center}

The delivery time achieved by the schemes in \cite{r2} and \cite{r8} happen to be the same for this memory, and is less than $T_D(3,1)$. This is a small penalty incurred due to decentralization.\\

\noindent \textit{Considering the case of non-distinct demands}\\

Consider a demand vector $d=(1,2,2,1)$. Out of the $4$ files, only two of them are requested by the users. First, we will remove the redundancy present within a cache.  User $3$ is not considered in the delivery process, since it can acquire all the desired subfiles from the transmissions meant for user $1$. Hence, the modified profile $\mathcal{L}^{'}=(2,1)$.

\textit{Transmissions:} Initially transmit $W^1_\phi$ and $W^2_\phi$. The remaining subfiles are transmitted in two rounds, where each round serves the following users
\begin{equation*}
    \begin{aligned}
     \mathcal{R}_1^{'}=\{1,4\} \textrm{\hspace{0.15cm}and\hspace{0.15cm}}
     \mathcal{R}_2^{'}=\{2\}
    \end{aligned}
\end{equation*}

The set of leader users in each round are: $\mathcal{Q}_1 = \{1\}$ and $\mathcal{Q}_2 = \{2\}$. The remaining transmissions are:
\begin{equation*}
    \begin{aligned}
       \textrm{First round} &:   W^1_2 \oplus W^1_1\\ 
      \textrm{Second round} &:   W^2_2\\ 
     \end{aligned}
\end{equation*}

\noindent The normalized delivery time required in this case is $T_D(3,1)=1$.\\

\noindent \textit{Effect of user-to-cache association on performance}\\

For this example, the possible association profiles are: $\mathcal{L}=(4,0)$, $\mathcal{L}=(3,1)$ and $\mathcal{L}=(2,2)$. 
Consider the user-to-cache associations $\mathcal{U}=\{\{1,2,3,4\},\phi\}$ and $\mathcal{U}=\{\{1,2\},\{3,4\}\}$ having profiles $\mathcal{L}=(4,0)$ and $\mathcal{L}=(2,2)$, respectively.
Let $d=(1,2,3,4)$. Then the transmitted messages in both the cases are as follows:\\
For $\mathcal{U}=\{\{1,2,3,4\},\phi\}$
\begin{equation*}
    \begin{aligned}
    \textrm{First round} &:  W^1_2, W^1_\phi \\ 
    \textrm{Second round} &: W^2_2, W^2_\phi \\ 
    \textrm{Third round} &: W^3_2, W^3_\phi \\ 
    \textrm{Fourth round} &: W^4_2, W^4_\phi \\ 
    \end{aligned}
\end{equation*}

For $\mathcal{U}=\{\{1,2\},\{3,4\}\}$
\begin{equation*}
    \begin{aligned}
       \textrm{First round} &:   W^1_2 \oplus W^3_1, W^1_\phi, W^3_\phi \\ 
      \textrm{Second round} &:   W^2_2 \oplus W^4_1, W^2_\phi, W^4_\phi \\ 
     \end{aligned}
\end{equation*}

The normalized delivery time achieved in each case is listed below. The performance of the centralized caching scheme is also provided for a comparison.

\begin{table}[h]

\centering
\begin{tabular}{ c c c c } 
\hline 
  & $\mathcal{L}=(4,0)$ & $\mathcal{L}=(3,1)$ & $\mathcal{L}=(2,2)$  \\ [0.5ex]
 \hline
$T_D(\mathcal{L})$ & 2 & 7/4 & 3/2 \\ 
 
$T(\mathcal{L})$ \cite{r2} & 2 & 3/2 & 1\\[0.5ex] 
\hline 
\end{tabular}

\end{table}

\begin{itemize}
\item When all the users are connected to the same cache, there is no coded multicasting gain, only local caching gain $K\left(1-\frac{M}{N}\right)$ is achieved. Both the centralized and decentralized schemes exhibit the same performance for this profile.

\item Uniform assignment yields the smallest delivery time amongst all profiles.\\

\end{itemize}

\begin{figure}[t!]
\begin{center}
 \includegraphics[width=\columnwidth]{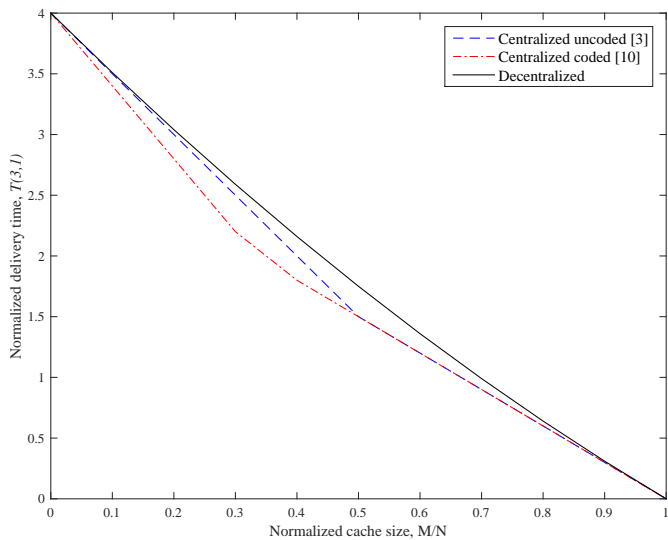}
 \caption{Performance of different caching schemes for a system with $K=4$ users, $N=4$ files $\Lambda=2$ caches and $\mathcal{L}=(3,1)$, for worst-case demand.}
 \end{center}
\end{figure}

\noindent \textit{Example 2: Online Setting - $K = 4$, $N = 4$ popular files, $\Lambda = 2$ and $\mathcal{L}=(3,1)$}.

Consider a system with $N=4$ popular files, $K=4$ users connected to $\Lambda=2$ caches, each of size $M=2$ units of file. Let $\beta=5/4$, then the helper nodes caches from a list of $N'=\beta N=5$ files, each of size $F$ bits. Assume that initially the helper nodes caches $2F/5$ bits from the files $\{W^1, W^2, W^3, W^4, W^5\}$, chosen uniformly at random. Let the order assigned be $(1,2,3,4,5)$. Users $\mathcal{U}_1=\{1,2,3\}$ and $\mathcal{U}_2=\{4\}$ gets connected to caches $1$ and $2$, respectively.

At time $t_1$: Let the set of popular files be $\mathcal{N}_{t_1}=\{ W^2, W^3, W^4, W^5\}$. Users' demand from the set $\mathcal{N}_{t_1}$, and assume distinct demand case. Let $d_{t_1}=(2,3,4,5)$. Since all the requested files are partially cached, the situation is similar to the offline scheme. We get the normalized delivery time as $T_{online}(3,1)=54/25$. Since all the requested files are partially cached, there is no change in the cache contents.

At time $t_2$: A new arrival ($W^6$) happens, so the set of popular files needs to be modified. The file $W^5$ is randomly chosen and replaced with the new file $W^6$, so that $\mathcal{N}_{t_2} = \{ W^2, W^3, W^4, W^6\}$. Take the demand vector to be $d_{t_2}=(6,2,3,4)$. Except $W^6$, all the other files are partially stored in all caches. So, the delivery is as follows:
\begin{itemize}
    \item Transmit the entire file $W^6$ as it is.
    \item Then, $\mathcal{U}^{'}_1=\{2,3\}$ and $\mathcal{U}^{'}_2=\{4\}$. Therefore, $\mathcal{L}^{'}=(2,1)$. The remaining transmissions are:
    \begin{equation*}
     \begin{aligned}
    \textrm{First round} &:    W^2_2 \oplus W^4_1, W^2_\phi, W^4_\phi \\ 
    \textrm{Second round} &: W^3_2, W^3_\phi 
     \end{aligned}
    \end{equation*}
    
\end{itemize}
Thus, in this case $T_{online}(3,1)=39/25+1=64/25$.\\
\textit{Cache Update:} Since $W^1$ is the least recently requested file amongst the set of cached files, it is evicted from all the helper nodes and replaced with the bits from file $W^6$. Here, we do need to consider the order because there is only one least recently sent file. Hence the set of cached files becomes $\{W^2, W^3, W^4, W^5, W^6\}$.

\textit{Converse}: Here, we show that the delivery time achieved in time $t_2$ is optimal. Consider the index coding problem $\mathcal{I}(M_{LRS},\mathcal{L},d_{t_2})$. We have $\kappa(\mathcal{M}_{LRS},\mathcal{L},d_{t_2}) \leq 64F/25$.
Construct a generalized independent set $ H=\{W^2_2, W^2_\phi, W^3_2,W^3_\phi, W^4_\phi, W^6\}$, which gives 
 $\alpha(\mathcal{M}_{LRS},\mathcal{L},d_{t_2}) \geq 64F/25$.
 Thus, $\kappa(\mathcal{M}_{LRS},\mathcal{L},d_{t_2})=\alpha(\mathcal{M}_{LRS},\mathcal{L},d_{t_2})$ and the optimality follows.

\section{Conclusion}
  \label{section:end}
  A decentralized coded caching scheme is introduced for the shared cache network where each cache serves an arbitrary number of users. It is extended further to the online setting. An index-coding based converse is presented to prove that the performance of the scheme is optimal for distinct demands, under uncoded placement in both cases. The converse  enabled the  construction of optimal linear error-correcting delivery schemes for the above scenarios.

\section*{Appendix}
 \label{section:app}
\subsection{Derivation of the delivery time for a uniform profile}
For an arbitrary association profile,
\begin{equation}
  \begin{aligned}[f]
      T_D^{*}(\mathcal{L}) & = \left(\frac{N-M}{M}\right)\times\\
     & \displaystyle\sum_{s=1}^{\Lambda}\left[\displaystyle\sum_{n=1}^{\Lambda-(s-1)} \mathcal{L}_n\binom{\Lambda-n}{s-1}\right]
     \left(\frac{M}{N}\right)^s\left(1-\frac{M}{N}\right)^{\Lambda-s}. 
     \label{arb}
\end{aligned}
 \end{equation}
 If the association is uniform, then $\mathcal{L}_{unif.}=\left(\frac{K}{\Lambda}, \frac{K}{\Lambda},..., \frac{K}{\Lambda}\right)$. The optimal delivery time for this case is
 \begin{equation}
  \begin{aligned}[f]
      T_D^{*}(\mathcal{L}_{unif.}) & = \left(\frac{N-M}{M}\right)\left(\frac{K}{\Lambda}\right)\times\\
     & \displaystyle\sum_{s=1}^{\Lambda}\left[\displaystyle\sum_{n=1}^{\Lambda-(s-1)}\binom{\Lambda-n}{s-1}\right]
     \left(\frac{M}{N}\right)^s\left(1-\frac{M}{N}\right)^{\Lambda-s}. \label{app}
\end{aligned}
 \end{equation}

\noindent By hockey stick identity:
$
    \displaystyle\sum_{n=1}^{\Lambda-(s-1)}\binom{\Lambda-n}{s-1} = \binom{\Lambda}{s}
$, then \eqref{app} becomes 
\begin{equation}
  \begin{aligned}[f]
      T_D^{*}(\mathcal{L}_{unif.}) & = \left(\frac{N-M}{M}\right)
     \left(\frac{K}{\Lambda}\right).\\
     & \textrm{\hspace{2cm}}\displaystyle\sum_{s=1}^{\Lambda}\binom{\Lambda}{s}
     \left(\frac{M}{N}\right)^s\left(1-\frac{M}{N}\right)^{\Lambda-s} \\
     & = \left(\frac{N-M}{M}\right)
     \left(\frac{K}{\Lambda}\right).\\
     & \textrm{\hspace{-1cm}}\left[\displaystyle\sum_{s=0}^{\Lambda}\binom{\Lambda}{s}
     \left(\frac{M}{N}\right)^s \left(1-\frac{M}{N}\right)^{\Lambda-s}-\left(1-\frac{M}{N}\right)^{\Lambda}\right].\\ 
\end{aligned}
 \end{equation}
 
 Thus, the normalized delivery time required for uniform user-to-cache association is
 \begin{equation}
  \begin{aligned}[f]
      T_D^{*}(\mathcal{L}_{unif.}) & = 
      \left(\frac{N-M}{M}\right)
     \left(\frac{K}{\Lambda}\right)\left[1- \left(1-\frac{M}{N}\right)^\Lambda\right] .
\end{aligned}
 \end{equation}
 
Same delivery time was obtained using the alternate algorithm in [2, Section V.B]. Thus, we prove that their scheme is optimal. 

\subsection{Delivery time when all users are assigned to a single cache}
In this case, we have  $\mathcal{L}=(K,0,..0)$. Then \eqref{arb} becomes
\begin{equation}
  \begin{aligned}[f]
       T_D^{*}(\mathcal{L}) & = \left(\frac{N-M}{M}\right)
      \displaystyle\sum_{s=1}^{\Lambda}K\binom{\Lambda-1}{s-1}
     \left(\frac{M}{N}\right)^s\left(1-\frac{M}{N}\right)^{\Lambda-s} \\
     & = K\displaystyle\sum_{s=1}^{\Lambda}\binom{\Lambda-1}{s-1}
     \left(\frac{M}{N}\right)^{(s-1)}\left(1-\frac{M}{N}\right)^{\Lambda-(s-1)} 
 \end{aligned}
\end{equation}
Which can be written as
\begin{equation}
  \begin{aligned}[f]
       T_D^{*}(\mathcal{L}) &= K\left(1-\frac{M}{N}\right)\displaystyle\sum_{s=0}^{\Lambda-1}\binom{\Lambda-1}{s}
     \left(\frac{M}{N}\right)^{s}\left(1-\frac{M}{N}\right)^{(\Lambda-1)-s}\\
     & = K\left(1-\frac{M}{N}\right).
 \end{aligned}
\end{equation}

\section*{Acknowledgement}
This work was supported partly by the Science and Engineering Research Board (SERB) of Department of Science and Technology (DST), Government of India, through J.C Bose National Fellowship to B. Sundar Rajan.


\begin{thebibliography}{1}
	
	\bibitem{r1} 
	M. A. Maddah-Ali and U. Niesen, \enquote{Fundamental limits of Caching,} \textit{IEEE Trans. Inf. Theory}, vol. 60, no. 5, pp. 2856–2867, May 2014.
	
\bibitem{r3} 
	M. A. Maddah-Ali and U. Niesen, \enquote{Decentalized Coded Caching Attains Order-Optimal Memory-Rate Tradeoff,} \textit{IEEE/ACM Trans. Netw.,} vol. 23, no. 4, pp. 1029-1040, Aug. 2014.
	
	\bibitem{r2}
	E. Parinello, A. Unsal, and P. Elia, \enquote{Coded Caching with Shared Caches: Fundamental limits with Uncoded Prefetching,} 2018, \textit{arXiv: 1809.09422.}[Online]. Available: https://arxiv.org/abs/1809.09422.
	
	\bibitem{online}
	R. Pedarsani, M. A. Maddah-Ali and U. Niesen, \enquote{Online Coded Caching,} \textit{IEEE/ACM Trans. Netw.,} vol. 24, no. 2, pp. 836-845, Apr. 2016.
	

	\bibitem{r6}
	N. S. Karat, A. Thomas, and B. S. Rajan, \enquote{Optimal Error Correcting Delivery Scheme for Coded Caching with Symmetric Batch Prefetching,} in \textit{Proc. 2018 IEEE International Symposium on Information Theory (ISIT)}, Vail, CO, 2018, pp. 2092-2096.
	
	\bibitem{errorc}
	N. S. Karat, A. Thomas and B. S. Rajan, \enquote{Optimal Error Correcting Delivery Scheme for Optimal Coded Caching Scheme with Small Buffers,} in \textit{Proc. 2018 IEEE International Symposium on Information Theory (ISIT)}, Vail, CO, 2018, pp. 1710-1714. 
	
	\bibitem{r7}
	N. S. Karat, S. Dey, A. Thomas, and B. S. Rajan, \enquote{An Optimal Linear Error Correcting Delivery Scheme for Coded Caching with Shared Caches,} in \textit{Proc. IEEE Int. Symp. Inf. Theory (ISIT)}, Jul. 2019, pp. 1217-1221.
	
	 \bibitem{an}
	S. Rathi, A. Thomas, and M. Dutta, \enquote{An Optimal Linear Error Correcting Scheme for Shared Caching with Small Cache Sizes,} in \textit{Proc. IEEE Int. Symp. Inf. Theory (ISIT)}, Jun. 2020, pp. 1676-1680.
	
	\bibitem{nj}
	N. S. Karat, K. L. V. Bhargav, and B. S. Rajan, \enquote{On the Optimality of Two Decentralized Coded Caching Schemes With and Without Error Correction,} in \textit{Proc. IEEE Int. Symp. Inf. Theory (ISIT)}, Jun. 2020, pp. 1670-1675.
	
	\bibitem{r8}
	A. M. Ibrahim, A. A. Zewail, and A. Yener, \enquote{Coded Placements for Systems with Shared Caches,} \textit{ICC 2019 - 2019 IEEE International Conference on Communications (ICC)}, Shanghai, China, 2019, pp. 1-6. 
	
	\bibitem{r4}
	E. Parinello, A. Unsal, and P. Elia, \enquote{Fundamental Limits of Coded Caching with Multiple Antennas, Shared Caches and Uncoded Prefetching,} \textit{IEEE Trans. Inf. Theory}, vol. 66, no. 4, pp. 2252–2268, April 2020.
	

	
	\bibitem{r5}
	S. H Dau, V. Skachek, and Y. M. Chee, \enquote{Error Correction for Index Coding with Side Information,} \textit{IEEE Trans. Inf. Theory}, vol. 59, no. 3, pp. 1517-1531, Mar. 2013.
	
	\bibitem{iscod}
	Y. Birk and T. Kol, \enquote{Coding-on-demand by an informed souce(ISCOD) for efficient broadcast of different supplemental data to caching clients,} \textit{IEEE Trans. Inf. Theory,} vol. 52, no. 6, pp. 2825-2830, Jun. 2006.
	
	\bibitem{index}
	N. Alon, A. Hassidim, E. Lubetzky, U.Stav and A. Weinstein, \enquote{Broadcasting with Side Information,} in \textit{Proc. 49th Annu. IEEE Symp. Found. Comp. Sci.}, Oct. 2008, pp. 823-832.
	
	
	
	\bibitem{grassl}
	M. Grassl, \enquote{Bounds on the minimum distance of linear codes and quantum codes,} Online available at http://www.codetables.de, 2007.
	

  \end{thebibliography}
\end{document}